\long\def\todo#1 { {\bf TODO:} [{\color{gray} #1}] }
\newcommand{\N}{\mathcal{N}}
\newcommand{\Nh}{\mathcal{N}_h}
\newcommand{\vi}{v_i}
\newcommand{\threshold}{t}
\newcommand{\Tiln}{\Tilde{n}}
\newcommand{\fn}{f(\Tilde{n})}
\newcommand{\subg}{\Gamma^{sub}}
\newcommand{\subs}{\Tilde{\sigma}}
\newcommand{\coin}{\mathcal{C}}
\newcommand{\PoW}{Proof-of-Work\xspace}
\newcommand{\PoS}{Proof-of-Stake\xspace}
\newcommand{\power}{power\xspace}
\newcommand{\powert}{power threshold\xspace}
\newcommand{\cns}{nodes\xspace}
\newcommand{\Sys}{\ensuremath{S}\xspace}
\title{Rational Censorship Attack:\\
Breaking Blockchain with a Blackboard}
 \author{Michelle Yeo\inst{1} \and
 Haoqian Zhang\inst{2}
 }
 \institute{National University of Singapore\\
 \email{mxyeo@nus.edu.sg} \\ \and
 École Polytechnique Fédérale de Lausanne \\
  \email{haoqian.zhang@epfl.ch}
 }
\date{}
\begin{document}
\maketitle

\begin{abstract}
Censorship resilience is a fundamental assumption underlying the security of blockchain protocols. Additionally, the analysis of blockchain security from an economic and game theoretic perspective has been growing in popularity in recent years.
In this work, we present a surprising rational censorship attack on blockchain censorship resilience when we adopt the analysis of blockchain security from a game theoretic lens and assume all users are rational. 
In our attack, a colluding group with sufficient voting power censors the remainder nodes such that the group alone can gain all the rewards from maintaining the blockchain.
We show that if nodes are rational, coordinating this attack just requires a public read and write \emph{blackboard} and we formally model the attack using a game theoretic framework.
Furthermore, we note that to ensure the success of the attack, nodes need to know the total true voting power held by the colluding group.
We prove that the strategy to join the rational censorship attack and also for nodes to honestly declare their power is a subgame perfect equilibrium in the corresponding extensive form game induced by our attack.
Finally, we discuss the implications of the attack on blockchain users and protocol designers as well as some potential countermeasures.

\keywords{censorship resilience \and game theory \and Nash equilibrium \and attacks }
\end{abstract}

\section{Introduction}
Blockchains like Bitcoin and Ethereum allow users to transact with others in a decentralized fashion without going through a central entity like a bank. A key security goal of such decentralized systems is censorship resilience: any user should in theory be able to participate in both consensus as well as the contribution of financial transactions to the blockchain.

Although transaction-level censorship has been well studied with several defensive measures proposed~\cite{DBLP:conf/ccs/AbrahamPY20,AlposACY23,BaumDF21,CachinKPS01,ChoiATB23,KostiainenGK22,LenstraW15,MillerXCSS16}, the analysis of consensus-level censorship attacks and corresponding mitigation is a lot more sparse~\cite{Bonneau16,mccorry2018smart,feather}. Indeed, two commonly proposed reasons why mitigating consensus-level censorship is not such a pressing issue is, firstly, that the amount of cost involved to coordinate such an attack is typically substantial. Indeed, an attacker not only has to gather sufficient power to launch the attack, but also has to find ways to coordinate strategies with other attacking nodes~\cite{51}. Secondly, it has also been argued that the short-term monetary benefits of attacking consensus on the blockchain could be outweighed by the long-term consequences of the attack, which includes a potential drop in the value of the underlying cryptocurrency resulting from the reduced trust in the blockchain~\cite{BahraniW23,ford2019rationality}.

On an orthogonal direction of research, recent years have seen a shift in the analysis of blockchain security from a cryptographic or worst-case perspective to an economic or game theoretic perspective (see e.g.,~\cite{BadertscherLZ21,eyal2018majority,LeshnoPS23,menipool,SchrijversBBR16} and more). 
In contrast to the classic cryptographic analysis where users are classified into the binary categories of byzantine or honest, an economic or game theoretic analyses typically assume all users in the blockchain are rational, that is, they are profit-maximizing. 
Under this perspective, a given system with rational users is shown to insecure if there exists attacks such that the revenue of the attack outweighs the cost of launching the attack.
The shift from cryptographic to rational analysis in blockchains is mainly motivated by the argument that in real world systems users might be motivated by economic reasons instead of purely byzantine ones to deviate from honest behavior. 
Hence, not only does this harmonizes well with blockchain security analysis as blockchains typically contain financial data, but rational analysis also allows for a more subtle perspective on blockchain security, where an argument for the security of the blockchain could still be made under rational arguments when the blockchain would be deemed insecure under cryptographic ones.

Interestingly, adopting the rational perspective of blockchain security could lead to arguments weakening the censorship resilience of blockchains. For instance, in~\cite{ford2019rationality} the authors point out that assuming blockchains exist in a larger financial ecosystem, there could be rational reasons to attack some blockchain system if one short sells the underlying cryptocurrency.
Thus, although the value of the cryptocurrency could diminish from attacks against the blockchain, a user that short sells sufficient amount of the currency would still make an overall profit in the larger financial ecosystem.
This weakens the aforementioned second argument for consensus-level blockchain censorship resistance, i.e., that the short term monetary benefits of censorship attacks are overwhelmed by their long-term financial impact.

Motivated by this observation, we focus on the first argument for consensus-level censorship resistance, i.e., the substantial cost involved in launching censorship attacks, and show that adopting the rational perspective makes it extremely cheap to launch a majority attack on the blockchain, which we term the \emph{rational censorship attack}. 
The rational censorship attack involves users colluding together and forming a coalition that controls sufficient voting power. Then, the colluding group simply ignores messages from the remaining excluded users, effectively removing them from participating in blockchain consensus and hence censoring them.
The surprising novelty of our attack lies in its simplicity. 
Indeed, coordinating our attack does not involve on any form of private communication between users in the blockchain except for public messages posted on a public blackboard with read and write access. In fact, we show that a single message posted on the blackboard is sufficient to initiate the attack. We also show that the knowledge that all other users will attempt to join the attacking coalition is sufficient incentive for any user to also join the coalition and attack the blockchain.


\subsection{Contributions}
Our contributions are summarized as follows:

\begin{itemize}
    \item We present a novel simple rational censorship attack on blockchains that guarantees a larger payoff for attacking nodes 
    with communication only via a public read-and-write blackboard.
    \item We formally model our attack using a game theoretic framework and prove that attacking the blockchain is a subgame perfect equilibrium in the extensive form game induced by our attack (\Cref{thm:spe} in~\Cref{sec:analysis}).
    \item We further prove that our attack strategy induces truthfulness among rational nodes, that is, nodes are incentivized to declare their true voting power for the attack.
    \item We present two plausible countermeasures to the attack in~\Cref{sec:deterrences} and discuss their effectiveness.
    \item We discuss the implications of the attack on blockchain users and protocol designers in~\Cref{sec:discussion}.
\end{itemize}


\subsection{Related work}

\textbf{Attacks on blockchain security leveraging rationality.}
The earliest work that exposes vulnerabilities in blockchain security by leveraging rationality arguments is selfish mining by Eyal and Sirer, demonstrating that the
Bitcoin protocol is not incentive-compatible~\cite{eyal2018majority}. 
Specifically, the work shows that even if the adversary controls
less than 50\% of the hashing power in Bitcoin, 
they can still launch a selfish mining attack to reap a larger fraction of rewards compared to the honest strategy of following the stipulated protocol.

Following the initial selfish mining attack analysis, 
several follow-up works present other attacks that expose further vulnerabilities behind the incentive mechanisms grounding blockchain protocols,
for instance whale attacks~\cite{liao2017incentivizing},
block withholding~\cite{Eyal2015},
stubborn mining~\cite{nayak2016stubborn},
transaction withholding~\cite{Babaioff2012},
empty block mining~\cite{Houy2014}, and
fork after withholding~\cite{Know2017}.

Ford and Böhme~\cite{ford2019rationality} proposed
a general attack on the rationality assumptions underlying blockchains and decentralized systems,
arguing that the rationality assumption is self-defeating
when analyzing the target system in the context of a larger ecosystem.
Zhang et al. presented a general rational attack on rationality by leveraging an out-of-band channel that incentivizes nodes to collude against the honest protocol~\cite{zhang2024zeroauction}.
Although we do not analyze the same problem in our work, we use a similar argument to justify why users might want to even attack the blockchain system they are a part of.


\textbf{Blockchain censorship.}
Blockchain censorship can occur through either a centralized/top-down approach~\cite{treasurymixer,treasurytornado,ofac} (e.g., mandated from a country's government), or through malicious attacks from blockchain users aiming to target other users in the system. 
While our work focuses on an attack that stems from the latter category, we note that~\cite{censorship} provides a comprehensive overview and analysis of government-imposed blockchain censorship.

Censorship stemming from malicious users can be further divided into (1) \emph{transaction-level} censorship, and (2) \emph{consensus-level} censorship.
In the former, certain transactions (for instance involving target entities) are deliberately excluded, and this is typically done by block miners to increase their miner extractable value (MEV)~\cite{censorship}.
Mitigation strategies include commit-and-reveal solution~\cite{ChoiATB23,LenstraW15}, encrypting transactions~\cite{zhang_et_al:LIPIcs.AFT.2023.3,CachinKPS01,KostiainenGK22,MillerXCSS16}, and multiparty computation (MPC)~\cite{DBLP:conf/ccs/AbrahamPY20,BaumDF21}.
Consensus-level censorship (and the focus of our work) 
excludes the target nodes completely from consensus~\cite{censorship},
thus subsuming the transaction-level censorship for the transactions and blocks from the excluded nodes.
Examples of consensus level censorship include \emph{feather fork} attacks~\cite{feather,WinzerHF19} and smart contract bribing attacks~\cite{Bonneau16,judmayer2021sok,McCorryHM18}.
Much like our censorship attack, feather fork attacks involve a public announcement to boycott a certain block from mined by some user or containing certain transactions. 
The public knowledge that certain blocks will be censored lowers the effective mining rate of other miners that mine on the victim block, which gives them some incentive to boycott these blocks.
The threat of feather forking has also been used in conjunction with bribing attacks~\cite{AvarikiotiKLM24bribe,WinzerHF19,HaoqianZhangBreakingBlockchainRationality}. These attacks typically involve a considerable amount of deposit as bribe and thus our work can be seen as an alternative feather forking solution that can be used in conjunction to their attacks to implement a \emph{costless} bribe.
Additionally, our work provides stronger guarantees to miners compared to typical feather fork attacks by \emph{ensuring a higher mining rate} to miners that join the attack coalition.
Smart contract bribing attacks employ smart contracts to automate the payment of rewards in exchange for proof of performing certain mining strategies. 
In contrast, our proposed attack does not require any bribes or external incentives to switch mining strategies, and is thus less costly to launch.
Rather, the existence of a single attack contract on the blockchain serves as incentive in itself to induce miners to join the attack.

Our work is also related to blockchain \emph{eclipse attacks}~\cite{HeilmanKZG15,TochnerZ20}. 
These attacks target the broadcast layer of the blockchain protocol and aim to control all in and outgoing connections of a subset of nodes, subsequently manipulating the victim nodes' view of the blockchain. 
The attackers can consequently leverage the power of the victim nodes to engineer a split in mining power, selfish mining, block races, and other attacks.


\textbf{Undetectable attacks.} Undetectability as an adversarial objective was first proposed in the context of MPC by Aumann and Lindell~\cite{AumannL07covert} under the notion of covert security. 
The original notion of covert security considers adversaries that choose to deviate from the protocol but also want to minimize the probability that they get caught, and serves as a useful framework to model adversarial behavior in realistic commercial settings~\cite{AumannL07covert,FaustHKS22}.
In the blockchain setting, undetectability objectives were proposed and analyzed for selfish mining~\cite{BahraniW23} and cryptographic self-selection~\cite{CaiLWZ24css}. Our work continues this line of research and incorporates detectability penalties into our cost model for our attack.

\section{Rational censorship attack}
\label{sec:attack}
We now present our rational censorship attack and game theoretic analysis which formally proves the strategy consisting of all players joining the attack coalition is a subgame perfect equilibrium in the underlying game induced by our attack. 
We leave the formal game theoretic definitions to~\Cref{app:games}.

\subsection{Model}\label{sec:model}
\paragraph{Notation and terminology.}
For a positive integer $n$, we use $[n]$ to denote the set $\{1, \dots, n\}$. 
For a set $S$, we use $\Delta(S)$ to denote the set of all probability distributions on $S$.
For a distribution $\alpha \in S$, we use $x \leftarrow \alpha$ to denote sampling an element $x$ from $S$ according to the distribution $\alpha$.
We use the notation $x \leftarrow S$ to denote that the element $x$ is sampled from $S$ uniformly at random.
For an $n$-dimensional vector $v \in S^n$, we use the indexing notion $v_{-i}$ to denote all elements in $v$ except for the $i$th element.
Throughout the paper, we use the terms ``user'', ``miner'', ``node'', and ``player'' interchangeably to denote a participant in the blockchain protocol.

\paragraph{Blockchain model.}
We denote by \Sys{} a blockchain maintained by a set of users $\N=\{x_1,x_2,\dots,x_n\}$. 
We assume that each user $x_i \in \N$ 
has some voting \power $\vi $, which determines the probability that $x_i$ creates the next block to append to the blockchain system\footnote{
For example, the voting \power in a \PoW blockchain is the 
\cns' computational power and the voting \power
in a \PoS blockchain is \cns' stake amount,
whereas the voting power in a practical Byzantine Fault Tolerance(PBFT) blockchain 
is the existence of an approved node.}.
We normalize all voting powers in the system such that they sum to 1, i.e., $\sum_{i=1}^{n} \vi = 1$.
For simplicity, 
we assume that the number of \cns and their \power distribution 
remains constant.

We also assume a fixed reward for mining a block on the blockchain, which we will term the \emph{block reward}.
We assume the blockchain has an associated cryptocurrency $\coin$ which is the denomination used to pay out block rewards to users.
Without loss of generality, we normalize the block reward so that the reward for a block is $1$;
thus each $x_i \in \N$ expects to get a reward of $\vi$ for each unfinalized block.

Additionally, we assume the underlying blockchain \Sys{} defines a protocol that users in \Sys{} should follow, henceforth known as the \emph{honest strategy}. 
The precise specification of the honest strategy is unimportant for our purposes, only that it should not preclude any user in the blockchain from contributing to the underlying consensus protocol.

Finally, for the purposes of our attack we assume the existence of a ``general blackboard'' with read and write access, which is used to initiate and coordinate the attack.
In our work, we use use the blockchain as the blackboard and a smart contract posted on the blockchain as a message on the blackboard to initiate the attack for simplicity purposes (see~\Cref{sec:attackdetails} for more details).
Nevertheless, we stress that our attack is agnostic to the actual implementation of the blackboard, with the only requirement being that it should be publicly accessible to all users in the blockchain.
This makes our attack also applicable even for Bitcoin-like blockchain protocols which are not Turing complete and do not support general smart contracts.

\paragraph{System model.} We assume time proceeds in discrete time steps where at each time step a block is added to the blockchain. 
For simplicity purposes, we also assume that changes in the state of the blockchain and financial system (e.g., price of $\coin$) as a result of user actions at some time step $t$ will be immediately reflected in the next time step $t+1$, i.e., the consequences of user actions do not incur any time lag. 
We also assume \emph{price stability}, that is, so long as attacks remain undetectable (more on this in our cost model below), the price of the associated cryptocurrency $\coin$ would remain constant. 
Naturally, this includes the case where there are no attacks and all users are honest and adhere to the stipulated blockchain protocol.
Essentially, this assumption removes any external influence outside of the blockchain system and user strategy space on the price.

\paragraph{Cost model.}
We assume the main cost to users comes from a downward movement in the price of the cryptocurrency $\coin$, which we further assume stems from a detectable deviation from the stipulated blockchain protocol (henceforth known as a detectable attack). 
In the context of the censorship attacks involving excluding some number of users which we will present in this work, we define detectability as a step function involving the number of excluded users. 
More precisely, we assume the existence of some threshold $\eta$ of excluded users whereby a censorship attack excluding $\eta$ or more users would render the attack detectable to the excluded users and would cause the price of $\coin$ to plummet\footnote{Using a similar argument to the detectability of selfish mining and the impact on the underlying cryptocurrency in~\cite{BahraniW23}}. 
Let $\alpha>0$ be some arbitrary large constant.
We thus formally define the cost of some strategy $\sigma$ excluding $k$ amount of users as follows:
$$f(\sigma) = \begin{cases}
		0, & \text{if $k < \eta$}\\
            \alpha & \text{if $k \geq \eta$}
		 \end{cases}$$
where the first case corresponds to the cost in the undetectable setting and the second case corresponds to the cost in the detectable setting.

\paragraph{Adversarial and utility model.}
We assume all users participating in the blockchain protocol are rational, i.e., each user strives to maximize their expected utility. 
Let $\sigma_i^T  = \{\sigma_i^1, \dots, \sigma_i^T\}$ denote the $T$-block horizon mining strategy of user $x_i$, where $\sigma_i^t$ for $t\in [T]$ denotes the strategy of $x_i$ at the $t$-th step of the $T$ block interval. 
For the censorship attack presented in~\Cref{sec:attackdetails}, we assume that the only source of revenue to users is the block reward. 
Thus, the expected utility of each user $x_i$ in the first attack setting is simply $x_i$'s average block reward less any incurred costs. 
This is defined, with respect to a user $x_i$'s strategy $\sigma_i^T$ as 
$ \mathbb{E}^{\sigma_i^T}\Big[ \frac{\sum_{t=1}^T (r_t - f(\sigma_i^t)\cdot v_i)}{T} \Big] $, where $r_t$ is the $t$th block reward that $x_i$ receives.

\subsection{Attack details}\label{sec:attackdetails}

We begin with the assumptions behind our attack.

\begin{description}
\item[Rational censorship attack assumptions:]\hfill

\begin{enumerate}
\item (Honest threshold.) We assume there is a globally-known 
\powert $\threshold$ such that,
within a time period and for a set of nodes $\Nh \subset \N$ with $\sum_{i \in \Nh} \vi > \threshold$, the system \Sys{} functions perfectly if all nodes in $\Nh$ follow the honest strategy.
Furthermore, we assume that $t \geq \frac{1}{2}$
to avoid the situation where \cns split into two independent functional subsets~\cite{dwork1988consensus,GilbertL02}.
\item (Random and unknown response order.) 
We assume that the order in which nodes respond to the attack (see~\Cref{sec:attackdetails} for more details) is unknown to any of the nodes. 
More precisely, we assume the order of response is a random variable whose law is unknown and out of the control of any of the nodes.

\item (No big player.) 
No single node has voting power that exceeds $(1-t)$. 

\item (Detectability threshold.)
We assume the detectability threshold $\eta$ as defined in the cost model in~\Cref{sec:model} is known. In~\Cref{sec:eta_estimation} we describe how to realistically estimate $\eta$ in practice.
We also assume that the constant $\alpha$ representing the cost of excluding $k\geq \eta$ users (see cost model in~\Cref{sec:model} for details) is larger than any profits gained from the attack.


\end{enumerate}
\end{description}

\begin{algorithm}[t!]
\caption{$CTA$ smart contract}
\label{alg:cta}
\SetCommentSty{textnormal}
\DontPrintSemicolon
\SetKwInOut{Input}{input}
\SetKwFor{For}{for}{}{end}

\BlankLine
\Input{node powers $v_i$, node ids $x_i$}{}\;

$v:=$ declared power of user who posted the smart contract\;
$V := v$\;
$\N_a := \emptyset$\;
$\textsc{ActiveAttack} :=0$ \;
\While{new pair $\{x_i,v_i\}$ and $V<t$ and $n-|\N_a| < \eta$ and clock $<T$}{
$V := V+v_i$\;
$\N_a := \N_A \cup x_i$\;
}
\If{$V < t$ and clock $\geq T$}{
abort \; 
}
return $\textsc{ActiveAttack}=1$  \; 
\end{algorithm}

\begin{algorithm}[t!]
\caption{Rational Censorship Attack}
\label{alg:exclusion}

\SetCommentSty{textnormal}
\DontPrintSemicolon
\SetKwInOut{Input}{input}
\SetKwFor{For}{for}{}{end}
\SetKwProg{Send}{Send}{:}{}
\SetKwProg{Deliver}{Deliver}{:}{}

\BlankLine
\Send{Upon receiving a send request $m$ to $i$}{
    \If{attack is not active or $i \in \N_a$ }{
        Send $m$ to $i$
    }
}
\;
\Deliver{Upon receiving a deliver request $m$ from $i$}{
    \If{attack is not active or $i \in \N_a$ }{
        Deliver $m$ from $i$
    }
}
\end{algorithm}

The attack consists of $2$ phases: setup and attack. In the setup phase, a node creates a smart contract, known henceforth as the ``Call to Attack'' ($CTA$) smart contract. 
The contract, formally described in~\Cref{alg:cta}, waits for nodes to respond to the call with their ids and powers. 
As the nodes sequentially input their ids and powers, the creator of the contract stores the ids of the responding nodes. The contract also makes two checks: (1) the sum of the powers of the responding nodes (denoted by $V$) is larger than the power threshold $t$, and (2) the number of excluded nodes $n-|\N_a|$ is less that the detectability threshold $\eta$ (due to assumption $4$ above, it is unprofitable and hence irrational to exclude $\geq \eta$ nodes).
The contract terminates when $V \geq t$ or when some global timeout parameter $T$ is reached, whichever earlier, and outputs either success or failure respectively.
We assume that within this timeout parameter $T$ players' powers remain the same.
In the case of success, the node that creates the smart contract then disseminates the set of attack node ids $\N_a$ to all other nodes in $\N_a$.

Note that in the setup phase we require that the order in which nodes respond to the $CTA$ smart contract is random and unknown (this is needed in our proof of~\Cref{thm:declare} which shows the stability of our attack strategy). 
As we will show in~\Cref{sec:analysis}, responding positively (i.e., joining the attack) is a dominant strategy in the induced game, and thus nodes could rush input their details to the smart contract, making this random order unlikely to occur in practice.
However, this random order can still be imposed if the smart contract uses a secure random shuffle protocol with a random seed drawn from some randomness beacon (e.g., NIST~\cite{nist} or Randao~\cite{randao}) to shuffle the inputs. 

The attack phase is launched depending on a predetermined internal trigger among the attack nodes $\N_a$. 
When the attack is not yet triggered, each node in $\N_a$ sends and delivers messages normally as per the underlying blockchain protocol.
The moment the attack is triggered, each node only sends and delivers messages to and from nodes in the attack group $\N_a$, effectively censoring and excluding the remainder nodes $\N \setminus \N_a$ from participating in consensus in the blockchain.
We formally detail the attack in~\Cref{alg:exclusion}.

We note that following a successful outcome of the attack, an attacking player $x_i$ with power $v_i$ receives a reward of $\frac{v_i}{V}$.
We stress that the reward is \emph{independent} of the power value $v$ declared by $x_i$ during the call in the setup phase, as the rewards distribution mechanism functions as per the typical blockchain rewards distribution mechanism (see~\Cref{sec:model} for more details), just among a smaller set of nodes.
This \emph{rewards independence} property would play a crucial role in proving~\Cref{thm:declare} in~\Cref{sec:analysis} which states that joining the attack with an honest declaration of one's power is a Nash equilibrium in the game induced by the rational censorship attack.

\subsection{Induced game}\label{sec:formalgame}
The rational censorship attack described in the previous section induces the $n$-player, $2$-stage extensive form game $\Gamma$ depicted by the game tree $T$ in~\Cref{fig:game_tree}.
In the first round of $\Gamma$, some player, say $x_1$, decides whether or not to launch the attack and post the $CTA$ smart contract on the blockchain. 
The action space in this round is simply to decide whether to attack the blockchain or not (and if so, declare their power), and we denote by $A$ the choice to attack the blockchain.
The second round of $\Gamma$ immediately after $x_1$ posts the contract on the blockchain and is played by the remaining $(n-1)$ players (i.e., $x_2, \dots, x_n$).
These $(n-1)$ players move simultaneously in this round of $\Gamma$. 
Let $\epsilon$ denote the smallest unit of power a player can declare\footnote{E.g., the smallest denomination of the coin for Proof-of-Stake blockchain systems}. 
The action space in the second round of $\Gamma$ is $\{\bot, 0, \epsilon, 2\epsilon, \dots, 1\}$, where we use $\bot$ to denote that a node \emph{does not} take any action in a specific round.  
Any action apart from $\bot$ thus denotes an intention to participate in the attack with the action value $a \in \{0, \epsilon, 2\epsilon, \dots, 1\}$ representing the declared power of the participating node.
After all the players have made a move, we model the unknown order in which players respond to the contract by chance vertices in the game tree $T$ which randomizes the player order according to an unknown probability distribution.

Finally, we stress that the choice to model $\Gamma$ as a $2$-stage extensive form game with the second round consisting of simultaneous moves instead of an $n$-stage extensive form game where the remaining players respond sequentially after the first attack move conforms to the fact that in a real-world attack players can neither respond to the contract in a fully ordered fashion, nor have full information regarding the moves of other players (i.e., whether they have joined the attack or not). 

\begin{figure}[htb!]
    \centering
    \includegraphics[width=0.8\textwidth]{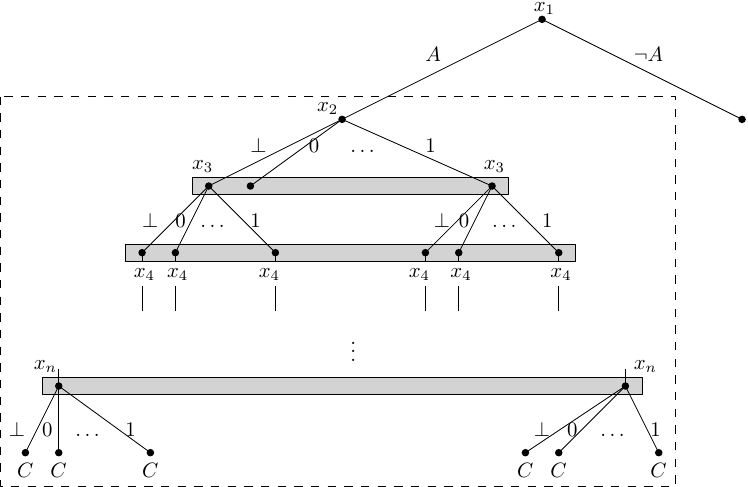}
    \caption{Game tree corresponding to $\Gamma$. The vertices in grey boxes belong in the same information set. The row of vertices labeled $C$ denote chance vertices.}
    \label{fig:game_tree}
\end{figure}

\subsection{Analysis}\label{sec:analysis}
Here we proceed to show that it is rational for players to post the attack smart contract on the blockchain as well as respond positively (i.e., agree to join the attacking coalition) to the attack.
Let $\Gamma$ be the game described in~\Cref{sec:formalgame}, and let $\subg$ be the subgame defined by the subtree in the dashed box in~\Cref{fig:game_tree} (i.e., the subtree rooted at vertex $x_2$).
Let us define the strategy profile $\subs := (v_2, \dots, v_n)$. 
That is, for players $x_i, i \in [2, \dots, n]$, the strategy $\subs_i$ for $x_i$ with power $v_i$ would be to truthfully play action $v_i$.

We define the following probabilities for each player $x_i$:

\begin{description}
\item[$p^i_1$:] $\mathbb{P}[\text{join attack and attack fails}]$
\item[$p^i_2$:] $\mathbb{P}[\text{don't join attack and attack fails}]$
\item[$p^i_3$:] $\mathbb{P}[\text{join attack and attack succeeds with player being in } \N_a]$
\item[$p^i_4$:] $\mathbb{P}[\text{don't join attack and attack succeeds}]$
\item[$p^i_5$:] $\mathbb{P}[\text{join attack and attack succeeds but player is not in } \N_a]$
\end{description}

We first show that the moment the attack smart contract is posted on the blockchain, it is rational for the remaining users respond by agreeing to be part of the attacking coalition and declaring their power truthfully.
To do so, we establish the following lemma which states the expected utility of each responding player $x_i$ given the existence of an attack and under the strategy $\subs$. 
Let $V\geq t$ denote the total power of the attacking coalition in the case of a successful attack.

\begin{lemma}\label{lem:utility}
    The expected utility of $x_i$ under the strategy $\subs$ is $p^i_3 \cdot \frac{v_i}{V}$.
\end{lemma}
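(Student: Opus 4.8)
The plan is to compute the expectation directly by enumerating the outcomes that can befall $x_i$ under $\subs$ and weighting each by its probability. Since $\subs$ prescribes that $x_i$ plays its true power $v_i \neq \bot$, player $x_i$ always joins the attack; consequently the two ``don't join'' events have probability zero, i.e. $p^i_2 = p^i_4 = 0$, and only the three ``join'' events with probabilities $p^i_1, p^i_3, p^i_5$ can occur. I would first fix the utility attached to each: if the attack succeeds and $x_i \in \N_a$ then $x_i$ earns the per-member share $\frac{v_i}{V}$; if the attack succeeds but $x_i \notin \N_a$ then $x_i$ is censored and earns $0$; and if the attack fails the chain runs normally and $x_i$ earns its usual $v_i$. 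Combining these yields the raw expression $p^i_1 \cdot v_i + p^i_3 \cdot \frac{v_i}{V} + p^i_5 \cdot 0$.

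The crux is then to show $p^i_1 = 0$, i.e. that the attack cannot fail under $\subs$. I would observe that under $\subs$ every one of $x_2, \dots, x_n$ responds with its true power while $x_1$ has already posted the $CTA$ contract, so the total power offered to the contract is $\sum_{j=1}^n v_j = 1$. Because $t \leq 1$ (indeed $t \geq \tfrac12$ by the honest-threshold assumption) the running sum $V$ must cross the threshold before the pool of responders is exhausted, and since all powers are fixed within the timeout the crossing occurs well before the clock reaches $T$; hence the contract never reaches its abort branch and always returns success. I would also note at this point that the contract's exclusion check keeps $n - |\N_a| < \eta$, so by the cost model the realized strategy is undetectable, giving $f(\subs_i) = 0$; this is precisely what licenses the clean utility $\frac{v_i}{V}$, with no $\alpha$ penalty, in the successful-and-included case.

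Finally, with $p^i_1 = 0$ the raw expression collapses to $p^i_3 \cdot \frac{v_i}{V}$, the claimed value; as a consistency check I would record that the surviving mass satisfies $p^i_3 + p^i_5 = 1$, the split between inclusion and exclusion being governed entirely by the unknown random response order. I expect the $p^i_1 = 0$ step to be the main obstacle: it is the only place where the global structure of the attack is actually used (total available power equals $1$, the threshold $t \leq 1$ is fixed, and powers are frozen within the timeout), and one must argue that neither the power-threshold abort nor the detectability check can produce a failure once all players participate. The excluded-case contribution vanishes trivially, and the remainder is routine bookkeeping over the five defined probabilities.
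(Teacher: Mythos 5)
Your proposal is correct and follows essentially the same route as the paper: decompose the expected utility over the five events, observe that the "don't join" and "attack fails" events have probability zero under $\subs$, and collapse the sum to $p^i_3 \cdot \frac{v_i}{V}$. You simply spell out in more detail why $p^i_1 = 0$ (total declared power equals $1 \geq t$ within the timeout) and why no $\alpha$ cost is incurred, both of which the paper leaves implicit.
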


\begin{proof}
Under the strategy $\subs$, all players join the attack and declare their powers truthfully. 
Thus, $u_i(\subs) = p^i_1 \cdot v_i + p^i_3 \cdot \frac{v_i}{V} + p^i_5 \cdot 0 = p^i_3 \cdot \frac{v_i}{V}$, as $p^i_1=0$ since all players join the attack under $\subs$. \qed
\end{proof}

The next theorem (proof in~\Cref{app:proof_declare}) shows that $\subs$ is a Nash Equilibrium in $\subg$.

\begin{restatable}[]{theorem}{declare}
\label{thm:declare}
$\subs$ is a Nash Equilibrium in $\subg$.
\end{restatable}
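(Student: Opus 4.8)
The plan is to verify directly that $\subs$ is a Nash equilibrium in $\subg$: fixing an arbitrary responder $x_i$ (with $i \in \{2,\dots,n\}$) and assuming every other player declares truthfully, I will show that $x_i$'s expected utility under any unilateral deviation is at most its utility under $\subs$, which by~\Cref{lem:utility} equals $p^i_3\cdot\frac{v_i}{V}$. Since the second-round action space is $\{\bot,0,\epsilon,\dots,1\}$, the deviations split into three groups: declining to join ($\bot$), joining with an under-declared power $v<v_i$, and joining with an over-declared power $v>v_i$. I will bound each group's payoff against this baseline.

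The abstention case is immediate from the \emph{no big player} assumption. Since $v_i\le 1-t$, the remaining responders control total power $\ge t$; as they all join truthfully, the coalition still reaches the threshold and the attack succeeds without $x_i$. Thus $x_i$ is censored and earns $0\le p^i_3\cdot\frac{v_i}{V}$, so $\bot$ is never profitable.

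For the declaration deviations I will lean on two structural facts. First, by \emph{rewards independence}, whenever $x_i$ is admitted to $\N_a$ and the attack succeeds, $x_i$ collects $\frac{v_i}{V}$ computed from its \emph{true} power, independent of the reported $v$. Second, under the \emph{random and unknown response order} the event ``$x_i\in\N_a$'' is determined solely by whether the cumulative power of the responders \emph{preceding} $x_i$ stays below $t$; since those responders are truthful and $x_i$'s own report never enters that prefix sum, the admission probability is exactly $p^i_3$ for every choice of $v$. Under-declaration is then weakly harmful (strictly, when $p^i_3>0$): reporting $v<v_i$ slows the cumulative declared sum, so strictly more members are admitted before it reaches $t$, the realized coalition power $V$ grows, and the reward $\frac{v_i}{V}$ shrinks below the baseline.

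The over-declaration case is the crux and the step I expect to be the main obstacle, because inflating $v$ shrinks the coalition and \emph{naively} appears to raise $\frac{v_i}{V}$. The key is that the $CTA$ contract halts on the \emph{declared} cumulative power crossing $t$, whereas sustaining the chain---and hence a genuine attack success---requires the coalition's \emph{true} power to meet the honest threshold $t$ (assumptions~1 and~3). Over-declaring by $\delta$ forces the loop to stop once the true cumulative power is only about $t-\delta$, so the realized coalition typically lacks the real voting power to maintain consensus and the attack \emph{fails}, leaving $x_i$ with the failure payoff $v_i\le\frac{v_i}{V}$ (using $V\le 1$). I will make this precise by conditioning on the realized response order: along each order I will argue that over-reporting either leaves $x_i$'s coalition and reward unchanged (when $x_i$ is the marginal member whose true power alone already crosses $t$) or drives the true coalition power below $t$ and triggers failure, so that the per-order payoff never exceeds the truthful one, after which taking expectations over the order yields the bound. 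The delicate point to handle carefully is the discreteness of the power grid (units of $\epsilon$) and the overshoot contributed by the last admitted member, which I would control by bounding the true coalition power in the halting configuration; the unknown order (assumption~2) is precisely what prevents $x_i$ from tuning $v$ to land just above the true threshold. Combining the three cases shows truthful declaration is a best response for every $x_i$, so $\subs$ is a Nash equilibrium.
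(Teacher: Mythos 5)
Your proposal is correct and follows essentially the same route as the paper's proof: the same three-way split into abstention, under-declaration, and over-declaration, with abstention dispatched by the no-big-player assumption, under-declaration by the growth of the realized coalition (hence a smaller share $v_i/V$ via rewards independence), and over-declaration by the observation that the contract halts on \emph{declared} cumulative power while the coalition's \emph{true} power falls short of $t$, causing the attack to fail with payoff $v_i \le v_i/V$. The only cosmetic differences are that the paper additionally spells out the positions $s(i)=k+1$ and $s(i)>k+1$ in the response order and closes with the standard convex-combination remark to cover mixed deviations.
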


Finally, let us define the strategy profile $\sigma:=(A, \subs)$. 
In the next theorem (proof in~\Cref{app:proof_spe}), we show that $\sigma$ is a subgame perfect equilibrium in the full game $\Gamma$, which implies that it is even rational to choose to attack the blockchain.

\begin{restatable}[]{theorem}{spe}
\label{thm:spe}
    $\sigma$ is a subgame perfect equilibrium in $\Gamma$.
\end{restatable}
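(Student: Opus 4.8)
The plan is to prove subgame perfection by backward induction, exploiting the fact that $\Gamma$ has only two subgames rooted at singleton information sets: the whole game $\Gamma$ and the subgame $\subg$ reached once $x_1$ plays $A$. Because the second-stage players $x_2,\dots,x_n$ move simultaneously, their nodes (other than the unique $x_2$-node immediately after $A$) lie in non-singleton information sets, so no proper subgame can be rooted strictly inside that block; the chance vertices likewise sit below all moves and hence inside $\subg$. Subgame perfection in the $\subg$ component is then exactly the content of Theorem~\ref{thm:declare}, which asserts that $\subs$ is a Nash equilibrium there. The only remaining obligation is to show that, anticipating the continuation $\subs$, the root player $x_1$ strictly prefers $A$ (launch the attack with a truthful declaration) to not attacking.

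To handle the root comparison I would first pin down $x_1$'s continuation payoff under $A$. Since $x_1$ is the initiator whose declared power seeds the counter (the line $V:=v$ in Algorithm~\ref{alg:cta}), $x_1$ is always a member of the winning coalition; and under $\subs$ every other player responds truthfully, so the cumulative power reaches $1>\threshold$ and the attack succeeds. In the notation preceding Lemma~\ref{lem:utility} this means $p^1_1=p^1_5=0$ and $p^1_3=1$, so by the same accounting as in that lemma, $x_1$'s payoff from $A$ equals $\mathbb{E}\big[\tfrac{v_1}{V}\big]$, the expectation running over the random response order that fixes the realized coalition power $V\ge\threshold$. The no-attack branch instead yields $x_1$ its ordinary honest reward $v_1$ (with cost $f=0$ under the price-stability assumption), so the root reduces to establishing $\mathbb{E}[v_1/V]>v_1$.

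The key step, where the assumptions do the real work, is arguing that $V<1$ in every realization, i.e.\ that at least one node is always excluded. Here I would invoke the \emph{No big player} assumption: since no node has power exceeding $1-\threshold$, whichever node responds last contributes at most $1-\threshold$, so the other $n-1$ responders already accumulate power at least $\threshold$. The threshold is therefore crossed strictly before the final responder, who is excluded from $\N_a$; hence $V\le 1-v_{\text{last}}<1$ in every ordering. Pointwise this gives $v_1/V>v_1$, so $\mathbb{E}[v_1/V]>v_1$ and $A$ strictly dominates not attacking. I would additionally note that among the attacking actions declaring $v_1$ truthfully is optimal for $x_1$ by the rewards-independence property, exactly as for the responders in Theorem~\ref{thm:declare}. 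Combining this optimal root action with the subgame equilibrium $\subs$ produces a profile that is a Nash equilibrium in every subgame of $\Gamma$, which is the definition of a subgame perfect equilibrium.

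I expect the main obstacle to be careful bookkeeping at the root rather than a deep argument: one must cleanly separate $x_1$'s special role as initiator (always in $\N_a$, attack always succeeds) from the responders' situation handled by Theorem~\ref{thm:declare}, and then make the ``exclusion always happens'' claim airtight through the No big player bound. A secondary subtlety is confirming that the detectability constraint $n-|\N_a|<\eta$ holds on the equilibrium path, so that $f=0$ and price stability indeed keep the honest benchmark at exactly $v_1$.
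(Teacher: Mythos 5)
Your proposal is correct and follows essentially the same route as the paper's proof: identify the two subgames, invoke Theorem~\ref{thm:declare} for $\subg$, and reduce the root to comparing $v_1/V$ against $v_1$ using $V<1$. Your derivation of $V<1$ from the no-big-player assumption (the threshold is crossed before the last responder, who is therefore excluded) is a welcome addition, since the paper asserts $V<1$ without justification.
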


\paragraph{Communication complexity.} The round complexity of our attack is $2$. In the first round the initiating user posts the smart contract message. The second round is needed for all other users to respond to the smart contract. 
Importantly, none of the rounds require exchanging private messages between users and thus our attack does not require users to know the identities of other users, or depend on the existence of secure communication channels between users (as would be the case if, e.g., using MPC to coordinate the attack).



\section{Plausible countermeasures}\label{sec:deterrences}

Having described the rational censorship attack, we now expound on two plausible countermeasures to the attack.
We emphasize that none of these countermeasures can \emph{completely} prevent the attack -- they just make it less lucrative for users under certain assumptions or in specific circumstances to launch the attack.
Additionally, these defensive measures also come with several assumptions, and we will discuss these assumptions and the general efficacy of these measures in greater detail in~\Cref{sec:discussion}.

The first countermeasure is more centralized in nature and requires modifying the block reward function of the underlying blockchain protocol to give larger rewards when there are more participants in the protocol.
We note that some form of correlation of rewards with participation has been discussed in Ethereum~\cite{ethereumReward} and we detail a general technique to do so in~\Cref{app:reward}.

The second countermeasure is purely decentralized in nature and involves users launching multiple smart contracts simultaneously. 
We present a summary of this countermeasure and leave the detailed description and game theoretic analysis to~\Cref{app:simultaneous}.
We first show in~\Cref{thm:coord} in~\Cref{app:simultaneous} that in the setting with multiple attack smart contracts, simply having the knowledge that there exists an attack coalition with sufficiently many players responding to a particular smart contract makes it irrational for some player to decide to break away from the coalition.
Nevertheless, such guarantees relies on the existence of some form of coordination among the users and in real world settings such coordination might involve heavy computational or communication costs.
As such, in settings where coordination is costly, a potential deterrence would be for multiple users to launch attack smart contracts near-simultaneously. 
When faced with multiple smart contract attacks and without any form of coordination, and especially if we assume that without prior knowledge players join coalitions uniformly at random, the probability of having a winner becomes strictly lower compared to the setting where there is a coordinator and few smart contracts. 
The optimality of this countermeasure is justified if we adopt the reasonable assumption of a first-movers advantage in launching the first smart contract, which makes the optimal strategy for players to be to launch attack smart contracts as soon as possible and hence near-simultaneously.

\section{Discussion}\label{sec:discussion}
We now proceed to discuss some implications of our attack.
First, we begin with a discussion of how the detectability threshold $\eta$ might be estimated in practice as well as discuss the detectability and profit tradeoff.
Next, we expound on why such attacks can be particularly problematic for both users in the blockchain, as well as protocol designers as these attacks are, at least in theory, difficult to prevent. 
Furthermore, despite the supposed rational nature of the attack as evident in our game theoretic analysis in~\Cref{sec:analysis}, as well as the difficulty of attack prevention, we have not yet observed such attacks in practice. We therefore also outline some plausible explanations as to why this is so.


\subsection{Estimating detectability threshold}\label{sec:eta_estimation}
A crucial assumption underlying the success of our attack is global knowledge of the detectability threshold $\eta$.
Indeed our attack excludes strictly less than $\eta$ users as we assume a much higher cost of bring detected compared to the potential profits of a successful attack (assumption $4$ in~\Cref{sec:attackdetails}). 
In reality, however, this assumption of a global knowledge of the undetectability threshold might be too strong, and thus we outline two broad methods of how to estimate $\eta$ for practical purposes.

First, we observe that $\eta$ is linked to the latency parameter of the underlying blockchain's broadcast network. Indeed if network latency is large, messages might be delayed due to latency and thus nodes might appear to be excluded. Thus, a reasonable choice of $\eta$ would be some function of the underlying blockchain network latency. 
Second, we can also augment the above method by performing a statistical analysis on the variance of the number of active nodes using historical data and set $\eta$ to some reasonable function of this variance in order to make the exclusion appear statistically believable. 

Additionally we emphasize that there exists a fundamental tradeoff between detectability and profitability: excluding more nodes leads to the remaining nodes sharing a larger proportionate share of the block rewards, however it also raises suspicion among the excluded nodes when too many nodes are excluded, which could lead to nodes switching to a different blockchain and thus causing a drop in the price of the underlying cryptocurrency. We believe the analysis of the optimal number of nodes to exclude in order to  maximize long-term profits is an interesting optimization problem and we leave further exploration of this problem to future work.

\subsection{Why are such attacks problematic?}
\subsubsection{For blockchain users: tragedy of the commons}
We first recall from~\Cref{lem:utility} that the expected utility of a responding user $x_i$ under the attack strategy $\sigma$ is $p^i_3 \cdot \frac{v_i}{V}$, and the expected utility of the same user in a situation without the attack would simply be $v_i$. 
Thus, depending on the unknown probability $p^i_3$, the expected utility of the responding user $x_i$ could actually be larger if the attack did not happen compared to the expected utility in the case of a successful attack. 

Nevertheless, the results of our analysis in~\Cref{sec:analysis} imply that the attack induces a certain ``tragedy of the commons'' situation.
Although the attack could potentially make $x_i$ worse off, the potential of being the initiating user (the user that initiates the attack and first posts the attack smart contract on the blockchain), which guarantees that the player can never be worse off, as well as the threat of being censored which effectively gives a payoff of $0$, is sufficient incentive to induce all users to launch or join the attack. 
Consequently, the entire blockchain is less decentralized and less secure overall.

\subsubsection{For protocol designers: difficulty of attack prevention}\label{sec:prot_designers}

\paragraph{Difficulty of detection and futility of penalization.}
Smart contracts are public and, in a censorship-resistant blockchain, persistent by nature. 
This makes it easy in a censorship-resistant blockchain to use the very existence of an attack smart contract on the blockchain as undeniable evidence of misconduct.
Nevertheless, the key outcome of the rational censorship attack is the erosion of censorship-resistance.
This makes it trivial for the attacking coalition to rewrite history by either denying the nature of the smart contract (by say disguising the $CTA$ smart contract as some other internally-coded contract, e.g., gofundme, voting poll, etc.), or simply by creating a hard fork on the block before the block containing the smart contract, effectively erasing the existence of the smart contract from history. 

Additionally, we note that while it is possible to observe the activity of the rest of the nodes in the blockchain to determine whether they are participating in or excluded from the attack, trying to impose penalties (e.g., stake slashing for Proof of Stake blockchains,) on the attacking coalition is futile. This is again because consensus is determined by the attacking coalition and the coalition can simply ignore any imposed penalties.

\paragraph{Limited efficacy of countermeasures.}
Another factor which makes attack prevention difficult for protocol designers is that the aforementioned countermeasures described in~\Cref{sec:deterrences} are only effective under certain assumptions.
For instance, any modification of the reward function firstly has to ensure that the supply of the currency does not balloon and cause hyperinflation.
Additionally, the lack of censorship-resistance as a consequence of the attack also implies that the attacking coalition can again simply choose to ignore the specified reward function and create a hard fork which implements an independent and fixed block reward. 
Compared to ignoring penalties this could involve modifying the actual software specification of the blockchain and thus could be more difficult to actually implement. 
Nevertheless, we note that the possibility of rewriting the stipulated protocol already makes this countermeasure less reliable.

In the case of simultaneously launching multiple smart contracts, the efficacy of this countermeasure relies on the crucial assumption that there is no coordination.
In particular, this countermeasure would lose its impact when the belief that all players will join the first attack smart contract that appears on the blockchain becomes common knowledge. 

\paragraph{Existence of external attack incentives.}
A commonly cited argument why attacks on a blockchain by their own users do not occur often in practice is due to the fact that the price of the underlying cryptocurrency would drop as a result of the attack (relative to other more stable currencies like USD or stablecoins), and this could ultimately result in a larger negative impact to the attacker compared to the rewards of the attack~\cite{BahraniW23,ford2019rationality}. 
Nevertheless, recent work by~\cite{ford2019rationality} observed that if a blockchain system exists in a larger financial ecosystem, and also assuming that the blockchain system does not control too much financial power within the larger financial ecosystem, the blockchain, users of the blockchain system could have an \emph{external incentive} to attack the system in the form of short selling (i.e., betting against the value of) the underlying cryptocurrency in the external financial ecosystem. 
Thus, so long as blockchains continue to exist in a larger financial ecosystem, the argument that maintaining the security of the blockchain is incentive compatible for users of the system becomes questionable.


\subsection{Why are such attacks unobserved in practice?}
Although the aforementioned exclusion attack is clearly beneficial to attackers, it is interesting to note that such an attack is typically unobserved in practice.
In this section, we discuss several factors (in decreasing order of likelihood) to explain the reasons.

\paragraph{Attacks are undetectable.}
The primary reason for the unobservability of such attacks in practice would be the undetectability aspect of the attack. 
For instance, if the attack only excludes a small amount of nodes $k \ll \eta$, the impact of the attack would be difficult to detect and could in fact be explained away by latency or random chance. 
As mentioned in~\Cref{sec:eta_estimation}, there is an inherent tradeoff between detectability and profitability of the attack and attackers that have lower risk-tolerance could launch attacks that exclude smaller amounts of nodes compared to $\eta$ to sacrifice short-term profits but maximize undetectability in the long run.

\paragraph{Nodes are not entirely rational.}
Our analysis relies crucially on the assumption that \emph{all} nodes are rational.
However, this assumption might be too strong in practice.
To demonstrate how reality might differ from theory,
we refer to the second-price sealed-bid auction example, in which the highest bidder wins the auction but they only pay the second-highest bid.
\cite{ausubel2006lovely} shows that bidders are are incentivized to bid their true valuation of the item in this setting.
Thus, if all the bidders are rational,
this auction would reveal the true valuation of each bidder.
However, the empirical study of~\cite{rosato2024novel} shows that
this is not the case and bidders do not always bid by their true valuation of the item even though it is rational for them to do so.
In our setting, even if only a few nodes declare their voting power arbitrarily or even maliciously in the $CTA$ smart contract, rational nodes would not trust and participate in the attack, rendering such an attack unlikely to happen. 

\paragraph{Undeniable evidence.}
Due to the public nature of smart contracts, the (pseudonymous) identity of the smart contract creator is known to all participants in the blockchain protocol, making it impossible for the identity to deny launching the attack. 
This could result in varying degrees of deterrence depending on the nature of the underlying blockchain. 
For instance, in permissioned blockchains with strong identities, or in hybrid systems which are partially centralized (e.g., blockchains deployed by workplaces or between groups of companies) this together with additional penalties like threats of expulsion or report to higher authorities, could be sufficient to identify the credentials of the attacker and deter potential attacker nodes.
Even in permissionless blockchains like Bitcoin, several recent studies~\cite{Biryukov2014DeanonymisationOC,BiryukovP15,KoshyKM14} have shown that it is relatively easy and efficient to deanonymize users based on the underlying network broadcast layer.

\paragraph{Existence of larger financial ecosystem.}
As mentioned in~\Cref{sec:prot_designers},
blockchains do not exist in a vacuum but rather in a larger financial ecosystem and users could leverage external incentives to attack their blockchain.
We note, however, that whether attacks could occur or not in this setting depends on the overall alignment of these external incentives. 
Thus, the presence of a larger financial system acts as a double edged sword that provides both incentives and deterrence to attack the blockchain.



\section{Conclusion}\label{sec:conclusion}

In this work, we propose an attack on blockchain censorship resistance that utilizes the rationality assumption common in blockchain security analysis. Our attack is simple and does not require any form of coordination among users except for messages posted on a public blackboard. We show given knowledge that other users are willing to join the attack, the strategy of joining the attack and truthfully declaring one's power is a subgame perfect equilibrium in the game induced by our attack. This implies that rational users have some incentive to attack the blockchain even in the absence of external financial incentives like bribes.

We conclude our work with two interesting directions of future research. First, it would be interesting to see if one can develop countermeasures with stronger security guarantees against the attack, perhaps utilizing cryptographic techniques to hide the identities of the users.
Second, we note that our attack strategy is only one one equilibrium in the induced game and there might exist other equilibrium strategies. Thus, it would be interesting to discover other equilibrium strategies perform a comparison based on social welfare to measure how well (or rather, ``badly'') our equilibrium performs in terms of social welfare compared to other equilibria.

\subsubsection*{Acknowledgments.} This work was partially supported by MOE-T2EP20122-0014 (Data-Driven
Distributed Algorithms).


\clearpage
\bibliographystyle{splncs04}
\bibliography{reference}

\appendix

\section{Game theoretic definitions and preliminaries}\label{app:games}

\paragraph{Strategic games and Nash equilibrium.}
Let $\Gamma = (N, (A_i), (u_i))$ be an $N$ player simultaneous strategic game where $A_i$ is a finite set of actions for each player $i \in [N]$ and denote by $A := A_1 \times \cdots \times A_N$ the set of action profiles. 
The utility function of each player $i$, $u_i: A \rightarrow \mathbb{R}$, gives the payoff player $i$ gets when an action profile $a \in A$ is played. A \emph{strategy} $\sigma_i \in \Delta(A_i)$ of a player $i \in [N]$ is a distribution over all possible actions of the player. We say player $i$'s strategy is \emph{pure} if it a Dirac distribution over $A_i$.

\begin{definition} (Nash equilibrium).
A Nash equilibrium (NE) of $\Gamma$ is a product distribution $\alpha \in \times_{j \in [N]} \Delta(A_j)$ such that for every player $i \in [N]$ and for all $a'_i$ in $A_i$, 
$$\mathbb{E}_{a \leftarrow \alpha}[u_i(a)] \geq \mathbb{E}_{a \leftarrow \alpha}[u_i(a'_i, a_{-i})] $$
\end{definition}

\paragraph{Extensive form games and game trees.}
The above definition of Nash equilibrium is only applicable to single-shot games.
Games that span multiple rounds (where players' actions arrive sequentially) are modelled as extensive-form games.
An extensive-form game can be represented as a finite game tree where for every non-leaf vertex $x$ there are functions that describe the player that moves at $x$, the set of all possible actions at $x$, and for each action $a$, the child node that leads from $x$ given $a$. 
Moreover, in the imperfect information setting, or a mixed setting where players can make simultaneous moves, all player vertices are further partitioned into information sets $I$ which captures the idea that the total information about the game given to a player that makes a move at any vertex $x \in I$ is the same as making a move from any other vertex $x' \in I$.
Thus, the player is effectively rendered uncertain of their precise location in the game tree modulo the vertices in their information set.
A path from the root of the game tree to a leaf vertex corresponds to a \emph{game play} in $\Gamma$ which is a sequence of player moves made by the players in the game. 
Each leaf node is assigned a payoff vector which represents the payoff of each player if the game terminates at this leaf.
Finally, in our setting we also account for the existence of chance nodes in the game tree. 
These are nodes that are controlled by the environment and not by any player, and hence models randomness injected into the game that is beyond the control of the players.
For a formal definition of extensive-form games, see, e.g., \cite{Osborne1994}.

\paragraph{Subgame perfection.}
A subgame of an extensive-form game corresponds to a subtree rooted at any non-leaf vertex $x$ that belongs to its own information set $I$, i.e., there are no other vertices that are in $I$ except for $x$. A strategy profile is a \emph{subgame perfect equilibrium} if it is a Nash equilibrium for all subgames in the extensive-form game.

\section{Omitted proofs}\label{app:proofs}

\subsection{Proof of~\Cref{thm:declare}}\label{app:proof_declare}
\declare*

\begin{proof}
We first note that there are $3$ classes of pure strategy deviations that can be made by any player $x_i$: 
\begin{enumerate}
    \item $x_i$ plays $\subs^{'}_i = \bot$
    \item $x_i$ plays $\subs'_i = v<v_i$ 
    \item $x_i$ plays $\subs'_i = v>v_i$ 
\end{enumerate}
For each class of strategy deviations, let us denote $\subs' = [\subs'_i, \subs_{-i}]$. We will go through each class and show that for each class of strategy deviations $u_i(\subs) \geq u_i(\subs')$.
Let $s : [n-1] \rightarrow [n-1]$ be some permutation and suppose players respond to the smart contract in the following order: $\{(x_{s(i)=1}, v_{s(i)=1}), \dots, (x_{s(i)=n-1}, v_{s(i)=n-1})\}$ for $i \in \{2, \dots, n\}$.

For the first class of deviations, the expected utility of player $x_i$ is $u_i(\subs') = p^i_2 \cdot v_i + p^i_4 \cdot 0$. 
However, due to the no big player assumption and since the rest of the players still play according to $\subs$, excluding $x_i$ from the coalition would still ensure the rest of the coalition has power $>t$, thus the attack would always succeed even in the case where $x_i$ decides not to join the attack. 
Hence, $p^i_2=0$ and $u_i(\subs') = 0.$
This is no greater that the expected utility of $x_i$ under $\subs$ which is $p^i_3 \cdot \frac{v_i}{V}$ from~\Cref{lem:utility}.

Now we compute the expected utility of $x_i$ under the second class of deviations where $x_i$ joins the attack but declares some power $v<v_i$. Using the same reasoning as before, the expected utility of $x_i$ under $\subs'$ is $u_i(\subs') = p^i_3 \cdot \frac{v_i}{V'}$ where $V'$ is now the total power of the coalition in the case where the coalition succeeds under strategy $\subs'$.
The probability $p^i_3$ that $x_i$ joins the successful attack and is in the coalition remains the same in both $\subs$ and $\subs'$ as this probability in the case of smaller declared power depends purely on the random ordering of players which is agnostic to the declared powers. 
Let us denote by $k$ the index in which $\sum_{j=1}^{k-1} v_{s(i)=j} < t$ and $\sum_{j=1}^{k} v_{s(i)=j} \geq t$, and $P$ the set of players $x_{s(i)=j}$ for $j\in [k]$.
We also use $V$ to denote the total amount of power of the coalition $P$, i.e., $V=\sum_{j=1}^{k} v_{s(i)=j} $.
Now suppose $x_i$ is in the set $P$.
Under the strategy $\subs'$, if $V-(v_i-v) \geq t $, then under the rewards independence property the expected utility of $x_i$ will still be $\frac{v_i}{V}$, which is the same as the expected utility under $\subs$.
If, however, $V-(v_i-v) < t $, then additional players would have to join the coalition $P$ such that the declared powers will be at least $t$.
Let $k'$ denote the total number of players needed to ensure the sum of declared powers $\geq t$ and $V'$ denote the new sum of declared powers, i.e., $V'=\sum_{j=1}^{k'} v_{s(i)=j}$. Since $k'>k$, $ \frac{v_i}{V'} < \frac{v_i}{V}$, which makes the expected utility of $x_i$ under $\subs'$ strictly less than that under $\subs$ in this case.

We now turn to the final class of deviations where $x_i$ joins the attack but declares some power $v>v_i$.
Let us denote by $k$ the index in which $\sum_{j=1}^{k} v_{s(i)=j} < t$ and $\sum_{j=1}^{k+1} v_{s(i)=j} \geq t$, and $P$ the set of players $x_{s(i)=j}$ for $j\in [k]$.
Let $V :=\sum_{j=1}^{k+1} v_{s(i)=j}$.
Consider the positioning of $x_i$ in the random ordering of the players.
If $s(i) \leq k$, declaring $v>v_i$ with $v+\sum_{(x',v')\in P\setminus \{(x_i,v_i)\}} v' < t$ would still result in the $(k+1)$th player added to the coalition. 
From the rewards independence property, the payoff of $x_i$ in this case would be $\frac{v_i}{V}$ which is the same as declaring $v_i$ under $\subs$.
Now if $v+\sum_{(x',v')\in P\setminus \{(x_i,v_i)\}} v' \geq t$, the attack will fail as the total power of the coalition is less than the specified threshold $t$. 
If $s(i)=k+1$, $x_i$ will be accepted into the attack coalition no matter what power $x_i$ declares and the attack will succeed. Furthermore, due to the rewards independence property, the payoff of $x_i$ in this case will still be $\frac{v_i}{V}$ where $V$ is the total power of the coalition. 
If $s(i)>k+1$, the attack will still succeed but $x_i$ will not be accepted into the attack coalition no matter what power $x_i$ declares. The payoff of $x_i$ in this case is $0$. 
Altogether, we see that under the strategy $\subs'$, the probability of the attack failing is larger than that in the case of $\subs$, even though the payoffs in the case of failure or success are the same. Thus in expectation, the utility of $x_i$ under $\subs'$ is no greater than that under $\subs$.


Finally, we note that any mixed strategy must be a convex combination of any of the above described pure strategies, and since the payoff of $x_i$ under these pure deviation strategy classes is at most the payoff under $\subs$, the payoff of $x_i$ under a mixed deviation strategy must also be at most that under $\subs$. 
Hence we conclude that $\subs$ is a Nash Equilibrium in $\subg$. \qed
\end{proof}

\subsection{Proof of~\Cref{thm:spe}}\label{app:proof_spe}

\spe*

\begin{proof}
Observe that $\Gamma$ only contains $2$ subgames, the first being the full game corresponding to the game tree rooted at $x_1$, and the second being the subgame $\subg$ corresponding to the game tree rooted at $x_2$.
From~\Cref{thm:declare}, we know that $\subs$ is a Nash Equilibrium in $\subg$.
Hence, it suffices to show that no deviation from $\sigma$ can improve the expected utility of $x_1$ at round $1$ of $\Gamma$.
Under $\sigma$, $u_1(\sigma) = \frac{v_1}{V}$ where $V$ is the total power of the attacking coalition. 
Since $V<1$, this is larger than $v_1$ which is the expected utility of $x_1$ when $x_1$ chooses to deviate to the pure strategy of not launching the attack.
Due to convexity, any mixed strategy would not give an expected utility which is larger than the pure strategy of attacking. 
Therefore we conclude that $\sigma$ is a subgame perfect equilibrium in $\Gamma$. \qed
\end{proof}

\section{Attack countermeasures}
\subsection{Modifying the reward function.}\label{app:reward}
A natural countermeasure to consider is to modify the block reward function such that it is not fixed and independent of the number of participants in the blockchain, but rather a monotone increasing function of the participants.
Specifically, instead of a fixed reward of $1$ paid out to the miner of a block (see~\Cref{sec:model}), the protocol can be modified to pay out a reward of $\fn$ instead, where $\Tiln \leq n $ is the number of effective participants in the protocol. For the right choice of $\fn$, this could make censorship attacks unprofitable compared to the honest behavior of following the protocol, as censorship attacks reduce the effective number of protocol participants.

In order to ensure that censorship attacks are unprofitable, the reward function $f(\cdot)$ needs to be chosen such that for every participant $x_i$ with power $v_i$ and for $\Tiln <n$, $\frac{v_i}{V} \fn < v_i f(n) \implies \fn < t \cdot f(n)$ since $t$ is a lower bound for $V$. 
Determining the specific form of the reward function depends on the parameters $n, \Tiln$ and $t$.
For instance, for $t=\frac{1}{3}$, if we can guarantee that $n>3\Tiln$\footnote{For instance, if the distribution of power in the blockchain is completely equal, i.e., every user in the blockchain has the same amount of power.}, then the reward function $f(\cdot)$ can be any linear function. 
A reward function that is linear in the number of participants also has the added advantage that the reward can be rescaled to the original reward of $1$ per block in the case where the protocol involves the full number of participants.
Note that more unequal distributions of power would require a more careful selection of reward function, and these functions might not be rescalable\footnote{The biggest disadvantage of a non-rescalable reward function is that it will impact on the supply of the currency and hence affect the value. Analyzing this impact together with decentralised techniques of price control is an interesting but orthogonal direction that should be explored in detail in further work.}.

For this countermeasure to be effective, the underlying blockchain has to be able to determine the effective number of participants so as to determine the reward amount. 
This can be done, for instance, in Bitcoin by using either third party sources like~\cite{hashrate} to compute the total hash rate or manually estimated using the current difficulty level and average block arrival time.

\subsection{Launching several contracts simultaneously}\label{app:simultaneous}
A second countermeasure would be for several users to launch attack smart contracts simultaneously.
In contrast to the previous countermeasure of modifying the block reward function, we argue that this countermeasure can (and should, using rational arguments) be implemented by players themselves in a decentralized fashion. 
In order to analyze the efficacy of this countermeasure, we first analyze the payoffs of users in the setting where there are multiple attack smart contracts launched independently by different users. 
Suppose there are $k<n$ attack smart contracts launched on the blockchain.
For $1 \leq j \leq k$, we say the $j$th smart contract or coalition of players \emph{wins} if the sum of total powers of players joining the $j$th smart contract exceeds the requisite power threshold $t$ before the players joining all other smart contracts. 
Note that the winner is necessarily unique which comes from our constraint that $t> \frac{1}{2}$.

The presence of multiple smart contracts also induces a strategic game $\Gamma'$ played among all players. The action space of the game increases to $\{\bot^j, 0^j, \epsilon^j, \dots, 1^j\}_{j=1}^k$ where the superscript of each action denotes playing the same underlying action as in the game $\Gamma$ but with respect to coalition $j$. 
For simplicity, the analysis in this section will only focus on a reduced action space where players just choose the coalition they want to join with the addition of not participating in any attack, i.e., the remaining players choose a single action from the action space $\{\bot, 1, 2, \dots, k\}$ where $\bot$ denotes not participating in any attack and playing an action $j \in [k]$ denotes joining coalition $j$.

\begin{remark}\label{rem:joinall}
    Note that if we do not restrict the players from joining a single coalition, the best and most rational strategy for players would be to choose to join \emph{all} coalitions, as joining a coalition does not cost players anything. In this setting, the winning coalition would trivially be the coalition corresponding to the first smart contract to announce its intention to attack on the blockchain.
\end{remark}

For player $x_i$, let $p^j_i, 1 \leq j \leq k$ denote the following probability:

\begin{align*}
    p^j_i &:= \mathbb{P}[x_i \text{ joins coalition }j \land \text{coalition }j \text{ wins } \land x_i \text{ is chosen to be in the attack coalition}]\\
    &=\mathbb{P}[x_i \text{ joins coalition }j]\mathbb{P}[j \text{ wins} \mid x_i \text{ joins coalition }j] \mathbb{P}[x_i \text{ chosen } \mid j x_i \text{ wins} \land x_i \text{ joins coalition }j]
\end{align*}

We analyze the case where all but one (say $x_i$) of the players decide to join some coalition, say coalition $j$.
Let $\sigma$ denote the strategy profile where the $i$th element of $\sigma$, $\sigma_i$, represents the pure strategy that $x_i$ chooses to join coalition $j$. 

We first note in the following lemma that given that all other players are joining attack coalition $j$, any strategy for player $x_i$ that puts non-zero mass on the $\bot$ action, i.e., not participating in the attack, is necessarily dominated by any other strategy that shifts this probability mass to some other action. 

\begin{lemma}\label{lem:avoid}
    Suppose all other players play according to $\sigma_{-i}$. Let $\sigma_i$ be some strategy such that the probability mass on $\bot$ is some $\alpha >0$. Then $\sigma_i$ is always dominated by another strategy $\sigma'_i$ that puts exactly $0$ mass on $\bot$.
\end{lemma}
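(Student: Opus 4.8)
The plan is to characterize the payoff of every pure action available to $x_i$ under the fixed profile $\sigma_{-i}$ (in which all other players join coalition $j$), and then to exhibit a strategy $\sigma'_i$ with no mass on $\bot$ that strictly improves on $\sigma_i$, simply by relocating the entire $\bot$-mass onto the coalition $j$.

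First I would determine what $\sigma_{-i}$ forces. By the no big player assumption $v_i \leq 1-t$, and since $t \geq \frac{1}{2}$ gives $1-t \leq t$, the players other than $x_i$ together hold power $1 - v_i \geq t$. Hence coalition $j$ crosses the threshold $t$ and \emph{wins no matter what $x_i$ does}; in particular $x_i$ is never pivotal for $j$. Moreover any rival coalition $j' \neq j$ that $x_i$ could join would contain only $x_i$, with power $v_i \leq 1-t < t$, so $j'$ necessarily fails. Reading off the payoffs: joining $j$ gives $x_i$ expected utility $p^j_i \cdot \frac{v_i}{V}$, which is strictly positive because $j$ wins with probability $1$ and $x_i$ has positive probability of being selected into the attack coalition under the random response order; whereas playing $\bot$, or any $j' \neq j$, leaves $x_i$ outside the unique winning coalition $j$ and hence censored, giving payoff $0$.

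It then remains to build the dominating strategy. I would let $\sigma'_i$ agree with $\sigma_i$ on all actions in $\{1, \dots, k\}$ except that it adds the $\bot$-mass $\alpha$ to the probability of action $j$ and sets the probability of $\bot$ to $0$; this is still a valid distribution. By linearity of expected utility, $u_i(\sigma'_i, \sigma_{-i}) - u_i(\sigma_i, \sigma_{-i}) = \alpha \bigl( p^j_i \cdot \frac{v_i}{V} - 0 \bigr) > 0$, using $\alpha > 0$, $v_i > 0$, and $p^j_i > 0$. Thus $\sigma'_i$ strictly dominates $\sigma_i$ while carrying no mass on $\bot$, which proves the lemma. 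The one step that needs care is the pivotality claim in the middle paragraph: the whole argument hinges on showing that $j$ wins even without $x_i$, so that both $\bot$ and every deviation to a rival coalition $j'$ collapse to payoff $0$ and only action $j$ yields positive utility. This is precisely where the no big player assumption together with $t \geq \frac{1}{2}$ is indispensable.
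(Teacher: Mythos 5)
Your proposal is correct and follows essentially the same route as the paper's proof: both hinge on the observation that, by the no big player assumption (together with $t \geq \frac{1}{2}$), coalition $j$ succeeds even without $x_i$, so playing $\bot$ yields payoff $0$ and the $\alpha$-mass is better relocated to action $j$. You merely spell out the arithmetic ($1 - v_i \geq t$) and the linearity step that the paper leaves implicit.
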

\begin{proof}
    For player $x_i$, the revenue of not participating in an attack is either $0$ in the case of a successful attack, or $v_i$ in the case of an unsuccessful attack. However, due to the no big player assumption, we know that the exclusion of only $x_i$ from the attack coalition $j$ does not prevent its success. Thus, $x_i$ can only gain a larger revenue by shifting the $\alpha$ mass on $\bot$ to action $j$. \qed
\end{proof}

We then show in the next theorem that, knowing that all other players are joining a specific coalition $j$, no player has any incentive to deviate from this strategy.

\begin{restatable}[]{theorem}{coord}
\label{thm:coord}
    $\sigma$ is a pure strategy Nash equilibrium in $\Gamma'$.
\end{restatable}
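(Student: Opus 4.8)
The plan is to show that when all players other than $x_i$ commit to joining a single coalition $j$, the player $x_i$ cannot strictly improve their expected utility by deviating to any other pure strategy, thereby establishing $\sigma$ as a pure strategy Nash equilibrium. First I would enumerate the pure strategy deviations available to $x_i$ under the reduced action space $\{\bot, 1, 2, \dots, k\}$: either $x_i$ plays $\bot$ (abstains), plays $j$ (follows the crowd, i.e. the prescribed strategy), or plays some $j' \neq j$ (joins a different coalition). I would then argue that each of these deviations yields an expected utility no greater than following $\sigma_i = j$.

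The key structural observation, which I would invoke first, is that since all other $n-1$ players are joining coalition $j$ and by the no big player assumption no single node has power exceeding $(1-t)$, the collective power of the other players already exceeds $t$ (as their total power is $1 - v_i \geq t$ by the no big player assumption). Hence coalition $j$ wins \emph{regardless} of what $x_i$ does, and moreover all other coalitions $j' \neq j$ receive zero power from the other players and therefore cannot possibly reach the threshold $t$. This collapses the analysis considerably. For the $\bot$ deviation, I would directly appeal to~\Cref{lem:avoid}, which already shows that abstaining (putting mass on $\bot$) is dominated, so $x_i$ earns $0$ while joining $j$ earns a strictly positive reward $\frac{v_i}{V}$ with positive probability. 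For the deviation to a different coalition $j'$, I would note that since no other player joins $j'$, coalition $j'$ consists only of $x_i$ with power $v_i \leq 1-t < t$ (using the no big player assumption together with $t \geq \tfrac12$), so coalition $j'$ can never win and $x_i$'s reward there is $0$; meanwhile coalition $j$ still wins without $x_i$, so $x_i$ gains nothing from $j'$ and is strictly worse off than joining $j$, where $x_i$ is admitted with positive probability $p^j_i > 0$ and collects $\frac{v_i}{V}$.

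Finally I would handle mixed strategies by the standard convexity argument: any mixed strategy for $x_i$ is a convex combination of the pure strategies just analyzed, and since none of the pure deviations beats playing $j$, no mixture can either. Combining the three cases shows $u_i(\sigma) \geq u_i(\sigma'_i, \sigma_{-i})$ for every deviation $\sigma'_i$, and since this holds symmetrically for every player $i$, the profile $\sigma$ is a pure strategy Nash equilibrium in $\Gamma'$.

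I expect the main obstacle to be stating the consequences of the no big player assumption cleanly and making precise exactly why a lone deviator joining coalition $j'$ can never cause $j'$ to win — this requires carefully combining the bound $v_i \leq 1-t$ with $t \geq \tfrac12$ to conclude $v_i < t$, and confirming that the residual coalition $j$ retains power $\geq t$ once $x_i$ leaves. The reward computations themselves ($0$ versus $\frac{v_i}{V}$ with probability $p^j_i$) are routine given~\Cref{lem:avoid}, so the delicate part is purely the power-accounting that guarantees the winner is fixed independently of $x_i$'s action.
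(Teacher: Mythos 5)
Your proposal is correct and follows essentially the same route as the paper's proof: dismiss $\bot$ via~\Cref{lem:avoid}, use the no big player assumption to argue that any coalition $j' \neq j$ has zero probability of winning (so $p^{j'}_i = 0$), and extend to mixed strategies by convexity. You simply make the power-accounting ($1 - v_i \geq t$ and $v_i \leq 1-t < t$) explicit where the paper leaves it implicit, which is a welcome clarification rather than a different argument.
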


\begin{proof}
From~\Cref{lem:avoid} we can focus on strategies restricted to the action space $\{1, \dots, k\}$.
However, again from the no big player assumption, $p^{j'}_i = 0$ for $j'\neq j$ as $\mathbb{P}[j' \text{ wins} \mid x_i \text{ joins coalition }j']$ is necessarily $0$ since all other players are in coalition $j$. Thus, the expected revenue of $x_i$ when putting all probability mass on action $j$ is always going to be no less than the expected revenue of $x_i$ when putting nonzero mass on some other action $j'\neq j$. Thus, $\sigma$ is a pure strategy Nash equilibrium in $\Gamma'$. \qed
\end{proof}

\begin{remark}\label{rem:joinanother}
    The above theorem statement assumes all players (including the ones who launched the smart contract) can make only $1$ action, thus an implication of~\Cref{thm:coord} is that even for players who want to launch attacks, knowing that all other players intend to form some coalition is sufficient deterrence to not even consider launching a separate attack smart contract and forming a separate coalition. 
    An interesting observation is that~\Cref{thm:coord} still holds (using exactly the same analysis) in the case where we relax the action constraints and allow the players who launch the attacks \emph{an additional action} which is to decide which coalition to join (and indeed they can choose to launch an attack but still join another the coalition corresponding to someone other player's attack). 
\end{remark}

~\Cref{thm:coord} relies on the presence of some form of coordination between players.
In a real world setting, such coordination can be achieved by agreeing to join the first smart contract (which will be unique due to the fact that blockchains are totally ordered) observed on the blockchain. 
This agreement presupposes either some form of external coordination in the form of an external mediator passing on this information to all players, or can be achieved internally among the players through MPC~\cite{cramer2015secure}. 
Both of these coordination solutions, however, necessitates either some element of trust or involved heavy computational and communication cost. 

In the absence of coordination, a potential deterrence would be for multiple parties to launch smart contracts simultaneously (or near-simultaneously). 
Indeed we observe that from the analysis in~\Cref{sec:analysis}, the launcher of the smart contract is automatically included in the attack coalition and thus the main concern for the attacker would be the probability of winning.
We also note that there is a certain first-movers advantage involved in launching the smart contract -- launching first, together with adequate time between the first and subsequent attack smart contracts could influence players' beliefs in the strength of the coalition corresponding to the attack (especially if players see that sufficiently many other players have already responded to the contract). 
Altogether, this implies that players cannot lose out if they launch smart contracts as early as possible, especially if we allow them the additional move of joining another coalition that does not need to be the coalition associated with their smart contract as per~\Cref{rem:joinanother}.
In such a setting, the strategy that maximises the utility of rational players would be for all players to launching attack smart contracts as soon as possible and hence near-simultaneously.
When faced with multiple smart contract attacks and without any form coordinator, and especially if we assume that without prior knowledge players join coalitions uniformly at random, the probability of having a winner becomes strictly lower compared to the setting where there is a coordinator and few smart contracts. 
As such, we see that in such a setting the innate rational strategy of players itself could possibly serve as an natural remedy to the attack.

\end{document}